\documentclass[%
 reprint,
 aps,
prl,
]{revtex4-2}
\usepackage{dcolumn}
\usepackage{graphicx,xcolor} 

\usepackage{amsmath,amssymb,amsfonts,mathtools,amsthm}
\usepackage{hyperref}

\usepackage[capitalize]{cleveref}

\usepackage{physics,bm}

\newtheorem{theorem}{Theorem}

\newtheorem{lemma}[theorem]{Lemma}

\newcommand{\co}{\mathcal{O}}

\newcommand{\dt}{{\Delta t}}

\begin{document}

\setlength{\abovedisplayskip}{2pt}
\setlength{\belowdisplayskip}{2pt}
\setlength{\abovedisplayshortskip}{2pt}
\setlength{\belowdisplayshortskip}{2pt}

\title{
Auxiliary-Field Quantum Monte Carlo on Quantum Hardware via Unitary Dilation}
\author{Xiantao Li} \affiliation{The Pennsylvania State University}
\thanks{Xiantao.Li@psu.edu}

\begin{abstract}
We present near-term quantum algorithms for auxiliary-field quantum Monte Carlo (AFQMC), viewed as imaginary-time
projection for ground-state calculation as an ensemble of one-body propagators driven by stochastic fields $\Omega$. Starting from the
Feynman-Kac formula, we convert each trajectory into a sequence of piecewise-constant one-body generators using stochastic Magnus expansions up to second order, and embed the resulting nonunitary slices into unitaries with a
small ancilla overhead. This lifts the projector dynamics to a unitary evolution,
enabling coherent circuit execution in the regime $\|\Omega \| \tau=\co(1)$ and reducing the need for frequent
mid-circuit measurement. We further derive an equivalent linear-combination-of-unitaries (LCU) form that
yields system-only, shallower circuits by trading ancilla cost for additional trajectory sampling. Benchmarks on the
Hubbard model verify the accuracy of the dilation and Magnus expansions classically and demonstrate multi-step executions
on IBM quantum hardware.
\end{abstract}
\maketitle

\section{Introduction.}
Strongly correlated electronic structure is one of the most compelling applications of quantum computation. In the fault-tolerant setting, phase-estimation and qubitization based approaches, often accelerated by low-rank factorizations of the two-electron integrals such as tensor hypercontraction, offer favorable asymptotic scaling  \cite{Babbush2018_PRX_Qubitization,Lee2021PRXQTHC,BauerBravyiMottaChan2020_ChemRev}. However, compiled resource estimates for chemically relevant problems still require long logical runtimes and large physical-qubit overheads  beyond current devices. In parallel, near-term variational strategies can reduce coherent circuit depth, but their accuracy depends on ansatz expressibility and the ability to optimize non-convex loss \cite{Peruzzo2014_VQE,Kandala2017_IBM_HEVQE,McClean2016_VQETheory}.

At the fault-tolerant level, ground-state preparation can also be achieved by spectral filtering and related imaginary-time inspired constructions, with near-optimal dependence on the spectral gap and target accuracy \cite{LinTong2020nearoptimalground,DongLinTong2022QETU}. 
Another important route is imaginary-time evolution (ITE), which avoids variational optimization by filtering excited-state components. Motta \emph{et al.} introduced quantum ITE schemes and quantum Lanczos as an ansatz-free framework for ground- and thermal-state preparation \cite{Motta2020NatPhys}, while variational and measurement-based ITE schemes broaden the algorithmic landscape \cite{McArdle2019,Mao2023MeasurementBasedDeterministicITE,Sun2021}. These methods, however, typically rely on measurement-intensive local reconstructions or variational parameter updates.

Auxiliary-field quantum Monte Carlo (AFQMC) is one of the most accurate and scalable classical methods for interacting fermionic systems, especially in strongly correlated materials and molecules. Through the Hubbard-Stratonovich (HS) transformation, it rewrites quartic interactions as a stochastic average over one-body propagators driven by auxiliary fields \cite{Hubbard1959,Stratonovich1958,Hirsch1983_AFQMC,ZhangCarlsonGubernatis1997_CPMC,MottaZhangChan2019_AFQMCsym,ShiZhang2021_AFQMC_Review}. Because each sampled propagator maps Slater determinants (SDs) to SDs, this structure is highly attractive for quantum algorithms. Yet this structural advantage has not previously led to an explicit gate-based realization of AFQMC algorithms. The obstacle is that each auxiliary-field slice \(\Omega\) is generally nonunitary and trajectory dependent, while the ITE emerges only after averaging over the trajectory ensemble. A naive implementation would therefore require either measurement-heavy adaptive updates or a coherent summation over trajectories with substantial postselection and depth overhead.

This letter closes this gap by formulating the first gate-level realization of HS-driven AFQMC projector dynamics on quantum hardware, by developing a unitary dilation of nonunitary one-body propagators. The resulting  dynamics can be executed coherently while preserving SD-to-SD structure, which can be compiled using standard Givens-rotation decompositions  \cite{Jiang2018_FermionicGaussian,Kivlichan2018_LinearDepthSlater}. We construct a resource-efficient dilation primitive in the regime \(\|\Omega\|\tau=\mathcal{O}(1)\) and combine it with stochastic Magnus integrators for higher-order discretization. In the fault-tolerant setting, we prove that the segmented construction supports coherent imaginary-time projection with linear scale-over-gap dependence. Unlike prior hybrid schemes, where quantum resources are used primarily to prepare improved trial states, estimate overlap data, or reduce constraint bias while the AFQMC propagation itself remains classical \cite{Huggins2022,Amsler2023_AFQMCtrial,Huang2024_QCQMC,Danilov2025_SQDphAFQMC}, we implement the  projector dynamics directly as a gate-level quantum primitive. With near-term application in mind, we formulate a linear combination of unitaries (LCU) form to trade ancilla overhead and circuit depth for additional classical sampling. We validate the method classically, verify the compiled circuits by statevector emulation, and demonstrate proof-of-principle multi-step executions on IBM quantum hardware. In the near-term sampled mode, the protocol inherits the classical AFQMC phase/sign bottleneck through estimator variance. In the coherent fault-tolerant setting, the same effect is recast as a normalization and postselection-amplitude cost for the segmented state preparation, which is amenable to future constrained-path, phaseless, and overlap-estimation based algorithms.
\section{Methods and Algorithms}

\vspace{-0.1in}

\subsection{Ground-state projection and AFQMC structure}

Consider an interacting \(N\)-fermion Hamiltonian in second quantization over an orthonormal spin-orbital basis,$\hat H \;=\; \hat H_1 + \hat H_2$, with \begin{equation} \hat H_1 = \sum_{ij} h_{ij}\, \hat c_i^\dagger \hat c_j,\; \hat H_2 = \frac{1}{2}\sum_{ijkl} V_{ijkl}\, \hat c_i^\dagger \hat c_j^\dagger \hat c_l \hat c_k . \end{equation}
Given a trial state \(\ket{\psi_T}\) with \(\braket{\psi_{\rm gs}}{\psi_T}\neq 0\), imaginary-time evolution (ITE) projects onto the ground state,
\begin{equation}\label{eq:proj}
\ket{\psi(\tau)}=e^{-\tau(\hat H-E_T)}\ket{\psi_T},
\qquad
\ket{\psi_{\rm gs}}\propto \lim_{\tau\to\infty}\ket{\psi(\tau)},
\end{equation}
where \(E_T\) is an energy shift used for stabilization. In AFQMC, \(\hat H_2\) is rewritten as a sum of squares of one-body operators,
\begin{equation}\label{eq:squares}
\hat H_2=-\frac12\sum_{\gamma=1}^m \lambda_\gamma \hat v_\gamma^2,
\end{equation}
with constants and one-body shifts absorbed into \(\hat H_1\) and \(E_T\). The HS transformation then converts the interacting projector into an average over one-body propagators driven by auxiliary fields. Since one-body propagators map Slater determinants (SDs) to SDs, this yields the structural foundation of AFQMC.

\subsection{Feynman--Kac representation and stochastic differential equations (SDEs)}

To avoid repeated kinetic-interaction splitting, we use a continuous-time stochastic representation of the projector. For clarity we first consider the case of real auxiliary fields, for which
$\hat H=\hat H_1-\sum_{\gamma=1}^m \hat L_\gamma^2,$ and $\hat L_\gamma:=\sqrt{\lambda_\gamma/2}\,\hat v_\gamma,$
so that the scalar shift \(E_T\) contributes only an overall factor \(e^{\tau E_T}\) and is omitted below. Then
\begin{equation}\label{eq:FK}
e^{-\tau \hat H}
=\displaystyle 
\mathbb{E}\!\left[
\mathcal{T}e^{
-\int_0^\tau \hat H_1\,ds
+\sum_{\gamma=1}^m \int_0^\tau \sqrt{2}\,\hat L_\gamma \circ dW_\gamma(s)
}
\right],
\end{equation}
where \(\{W_\gamma\}\) are independent Wiener processes, \(\circ\) denotes Stratonovich integration, and \(\mathbb E\) is expectation over Brownian paths. Equivalently,
\(
e^{-\tau \hat H}\ket{\psi_T}=\mathbb E[\ket{\phi_\tau}],
\)
where \(\ket{\phi_t}\) solves the Stratonovich SDE with $\ket{\phi_0}=\ket{\psi_T}$
\begin{equation}\label{eq:strat}
d\ket{\phi_t}
=
\big(
-\hat H_1\,dt
+\sum_{\gamma=1}^m \sqrt{2}\,\hat L_\gamma \circ dW_\gamma(t)
\big)\ket{\phi_t}.
\end{equation}
Our approach solves Eq.~\eqref{eq:strat} by stochastic Magnus integrators, so that each time step is represented by a single effective one-body generator. As a result, each sampled trajectory preserves SD-to-SD structure and admits a fermionic-Gaussian circuit implementation. Despite the presence of stochastic sampling, our approach is distinct from QDrift \cite{Campbell2019}, which uses randomized Hamiltonian simulation in a Trotter-style setting rather than a stochastic representation of imaginary-time projection. The general sign-indefinite Feynman--Kac representation is given in the Supplemental Material (SM).

\subsection{Stochastic Magnus expansion}


The stochastic Magnus expansion \cite{kamm2021stochastic} expresses the one-step propagator as
$U(t_{n+1},t_n)=\exp(\Omega_n)$.
Let $\Delta t=t_{n+1}-t_n$ and denote the Wiener increments
$\Delta W_{\gamma,n}=W_\gamma(t_{n+1})-W_\gamma(t_n)\sim\mathcal N(0,\Delta t)$.
For the Stratonovich SDE \eqref{eq:strat}, we write the generator in differential form as
$
A(t)\,dt \;=\; -\hat H_1\,dt \;+\; \sum_{\gamma=1}^m \sqrt{2}\,\hat L_\gamma \circ dW_\gamma(t).
$

The first-order Magnus generator is $\Omega^{(1)}_n=\int_{t_n}^{t_{n+1}}A(s)\,ds$, yielding
\begin{equation}\label{eq:weak1}
\Omega^{(1)}_n
\;=\;
-\,\Delta t\,\hat H_1
\;+\; \sum_{\gamma=1}^m \sqrt{2}\,\hat L_\gamma\,\Delta W_{\gamma,n},
\end{equation}
which, crucially, only involves one-body operators.
\vspace{-3pt}
To suppress time-step error, we retain the next term,
\begin{equation}\label{eq:weak2_general}
\Omega^{(2)}_n
=
\Omega^{(1)}_n
+
\frac{1}{2}\int_{t_n}^{t_{n+1}}\!\!\int_{t_n}^{s_1}\![A(s_1),A(s_2)]\,ds_2\,ds_1,
\end{equation} yielding an order-2 Magnus expansion.
Substituting $dA(t)$ into the commutator produces drift--noise and noise--noise contributions. For general SDEs, the noise--noise terms generate L\'evy areas and complicate higher-order schemes. Fortunately,
in the density-coupled HS decomposition, widely used in lattice-based models, the diffusion operators commute,
$[\hat L_\gamma,\hat L_{\gamma'}]=0$. The only surviving second-order
corrections are the drift--noise commutators $[\hat H_{1},\hat L_\gamma]$, giving
\begin{equation}\label{eq:Omega2}
\Omega^{(2)}_n
\;=\;
\Omega^{(1)}_n
\;-\; \sum_{\gamma=1}^m [\hat H_{1}, \hat L_\gamma]\; J_{0\gamma,n},
\end{equation}
where the Stratonovich integral $J_{0\gamma,n}$ is sampled via
\begin{equation}\label{eq:J0_def}
J_{0\gamma,n}
\;=\;
\frac{\Delta t}{2}\,\Delta W_{\gamma,n}
+\sqrt{\frac{\Delta t^3}{12}}\;\eta_{\gamma,n}, \;
\eta_{\gamma,n}\sim\mathcal N(0,1).
\end{equation}
Because the algebra of one-body operators is closed under commutators, $[\hat H_{1},\hat L_\gamma]$ remains strictly
one-body,  $\exp(\Omega^{(2)}_n)$ can again be compiled efficiently without introducing explicit two-body gates.

\subsection{Unitary dilation of  stochastic generators}

Each Magnus step produces a one-body exponent \(\Omega_k\) and a generally nonunitary slice propagator \(e^{\Omega_k}\). Over one segment of duration \(\tau=n_T\Delta t\), the AFQMC propagator is therefore
\(
B_{\mathrm{seg}}
:=
e^{\Omega_{n_T}}e^{\Omega_{n_T-1}}\cdots e^{\Omega_1}.
\)
To implement \(B_{\mathrm{seg}}\) coherently, we dilate each slice \(e^{\Omega_k}\) and compose the resulting unitaries over the segment, so that ancilla measurement is required only once per segment rather than after every time step. General dilation schemes have been developed for both deterministic and stochastic dynamics \cite{jin2022quantum,jin2025quantum}. Here we follow the moment-matching dilation framework \cite{li2025linear,wu2026universal}, and we split each slice exponent as
\(
\Omega_k=K_k-iH_k,
\)
with \(H_k\) and \(K_k\) Hermitian, and introduce an ancilla space \(\mathcal H_A\) with boundary states \(\langle l|\), \(|r\rangle\) and a skew-Hermitian generator \(F\) (\(F^\dagger=-F\)). The dilated slice Hamiltonian is
\begin{equation}\label{eq:dilatedH}
\widetilde H_k
=
\mathbb I_A\otimes H_k+(iF)\otimes K_k .
\end{equation}
which is Hermitian. The corresponding segment unitary is
\(
U_{\mathrm{seg}}
:=
e^{-i\widetilde H_{n_T}}e^{-i\widetilde H_{n_T-1}}\cdots e^{-i\widetilde H_1}.
\)
Projecting the ancilla, we get
\(
B_{\mathrm{seg}}
\approx
(\langle l|\otimes \mathbb I_S)\,U_{\mathrm{seg}}\,(|r\rangle\otimes \mathbb I_S),
\)
with a controllable postselection amplitude.

We choose \(iF\) as the nearest-neighbor hopping Hamiltonian on a 1D tight-binding chain \cite{li2025linear}. For an \(n_A\)-qubit ancilla (dimension \(d_A=2^{n_A}\)), the effective moment order scales as \(m=\Theta(d_A)\), and the dilation error  obeys
\[
\Big\|e^{\Omega_k}-(\langle l|\otimes \mathbb I)e^{-i\widetilde H_k}(|r\rangle\otimes \mathbb I)\Big\|
=
\mathcal O\!\left(\frac{\|\Omega_k\|^m}{m!}\right).
\]
Thus increasing \(n_A\) suppresses the dilation error super-polynomially. By contrast, first-order single-qubit-ancilla constructions based on \(\sigma_y\)-controlled primitives \cite{liu2021probabilistic,Mao2023MeasurementBasedDeterministicITE} are intrinsically small-step, with per-slice error \(\mathcal O(\|\Omega_n\|^2 \dt^2 ) \), and therefore require \(\|\Omega_n\| \dt \ll 1\).
By approximating \(\Delta W\) with a discrete distribution \cite{kloeden1992numerical}, we obtain bounded  slices and hence a uniform bound
\(
\|\Omega_k\|\le \|\Omega\|\,\Delta t
\)
for all admissible slices, where \(\|\Omega\|\) is independent of \(k\) (see the SM). Under this bound, the postselection success remains \(\Omega(1)\) provided \(\|\Omega\|\tau=\co(1)\). 
 
\vspace{-0.05in}
\begin{theorem}
\label{thm:main}
Let \(\hat H\) be Hermitian with ground energy \(E_0\) and gap \(\Delta>0\), and let \(\ket{\psi_T}\) have nonzero ground overlap $\gamma>0$.
Choose \(\tau=\Theta(1/\norm{\Omega})\) with \(\tau=n_T\Delta t\) and
\(\dt=\Theta(\sqrt{\epsilon_{\rm prep}})\).
If the per-segment normalized preparation error is sufficiently small, namely
\[
\epsilon_{\rm prep}
=
\mathcal O\!\left(
\frac{(1-e^{-2\tau\Delta})\,\epsilon}{\|\hat H\|}
\right),
\]
then after
\(
S=\Theta\!\Big(\frac{\norm{\Omega}}{\Delta}\log\frac{1}{\gamma\epsilon}\Big)
\)
segments one obtains a state \(\ket{\psi_S}\) with
\(
\bra{\psi_S}\hat H\ket{\psi_S}-E_0\le \epsilon.
\)
Given a block-encoding of \(\hat H/\alpha\), the final energy can then be estimated to additive error \(\epsilon\) with failure probability at most \(\delta\) using
\(
\widetilde{\mathcal O}\!\big(\alpha\,\epsilon^{-1}\log(1/\delta)\big)
\)
coherent queries \cite{Rall2021}.
\end{theorem}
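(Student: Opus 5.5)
The plan is to treat each segment as an approximate application of the normalized exact projector \(P_\tau := e^{-\tau(\hat H-E_0)}\), followed by renormalization, and to track the unnormalized ideal iterate and the implemented one separately. Write \(\ket{\psi_T}=\gamma\ket{\psi_{\rm gs}}+\sqrt{1-\gamma^2}\ket{\chi}\) with \(\ket{\chi}\perp\ket{\psi_{\rm gs}}\). The exact iterate after \(S\) segments, \(P_{S\tau}\ket{\psi_T}/\|P_{S\tau}\ket{\psi_T}\|\), has excited weight at most \(e^{-2S\tau\Delta}(1-\gamma^2)/\gamma^2\). Its energy error is then bounded by \(\|\hat H - E_0\|\) times this weight. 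Since \(\tau=\Theta(1/\|\Omega\|)\), choosing \(S=\Theta(\frac{\|\Omega\|}{\Delta}\log\frac{1}{\gamma\epsilon})\) makes this ideal contribution \(\le\epsilon/2\). I absorb the factor \(\|\hat H\|\) into the logarithm, treating it as polynomial.

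Next, I would define the per-segment normalized preparation error. The implemented segment is \(\widetilde B_{\rm seg}\), obtained by postselecting the dilated unitary \(U_{\rm seg}\). With state renormalization, the normalized output differs from \(P_\tau\phi/\|P_\tau\phi\|\) by at most \(\epsilon_{\rm prep}\). This error has two sources. The first is the Magnus weak error: second order in \(\dt\) for \(\Omega^{(2)}\), which is \(\co(\dt^2)=\co(\epsilon_{\rm prep})\) given \(\dt=\Theta(\sqrt{\epsilon_{\rm prep}})\), with the trajectory average taken as coherent or weak. The second is the dilation error \(\co(\|\Omega_k\|^m/m!)\) per slice, summed over \(n_T\) slices, which is made negligible by the choice of \(n_A\). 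The postselection amplitude is \(\Omega(1)\) because \(\|\Omega\|\tau=\co(1)\), as stated before the theorem. This keeps the normalization from amplifying errors within a single segment.

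The key step, and the main obstacle, is propagating these per-segment errors through \(S\) renormalized applications without accumulating a factor \(S\). Here I would work with the excited-weight functional, or equivalently with the energy error \(e_j=\bra{\psi_j}\hat H\ket{\psi_j}-E_0\). The exact normalized projector contracts the excited component by roughly \(e^{-\tau\Delta}\) relative to the ground component, so the squared weight contracts by \(e^{-2\tau\Delta}\). I would aim for a recursion of the form
\[
e_{j+1}\le e^{-2\tau\Delta}e_j + C\|\hat H\|\epsilon_{\rm prep},
\]
using that a normalized perturbation of size \(\epsilon_{\rm prep}\) changes the energy by \(\co(\|\hat H\|\epsilon_{\rm prep})\). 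There is a subtlety here: the contraction of the energy error under normalized projection holds only once the ground overlap is bounded below. I would handle this either by working in the tangent ratio (excited amplitude over ground amplitude), where projection is a genuine contraction by \(e^{-\tau\Delta}\), and converting at the end, or by checking inductively that the ground overlap never drops below \(\gamma/2\). Summing the geometric series gives the stationary error \(C\|\hat H\|\epsilon_{\rm prep}/(1-e^{-2\tau\Delta})\). This is exactly why the hypothesis takes \(\epsilon_{\rm prep}=\co((1-e^{-2\tau\Delta})\epsilon/\|\hat H\|)\): it makes this error \(\le\epsilon/2\). Adding the ideal contribution gives \(e_S\le\epsilon\).

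Finally, the energy estimate follows by invoking the cited amplitude or expectation estimation result of \cite{Rall2021}. Given a block-encoding of \(\hat H/\alpha\) and a coherent preparation of \(\ket{\psi_S}\), we estimate \(\bra{\psi_S}\hat H\ket{\psi_S}/\alpha\) to accuracy \(\epsilon/\alpha\) with Heisenberg scaling, at \(\widetilde{\mathcal O}(\alpha\epsilon^{-1})\) queries. Median-of-means boosting then supplies the \(\log(1/\delta)\) factor. The triangle inequality combines the preparation bias with the statistical error, after rescaling \(\epsilon\) by a constant.
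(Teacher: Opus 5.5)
Your overall architecture matches the paper's proof: a per-segment error from the Magnus truncation ($n_T\,\mathcal O(\Delta t^3)=\mathcal O(\tau\Delta t^2)$) plus the dilation remainder, an ideal contraction by $q:=e^{-2\tau\Delta}$, a one-step recursion whose geometric sum gives the $1/(1-q)$ floor, conversion to energy via $\|\hat H-E_0\|$, and Rall's estimator at the end.

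\textbf{The gap.} The problem is the functional you run the recursion on. The bound $e_{j+1}\le e^{-2\tau\Delta}e_j+C\|\hat H\|\epsilon_{\rm prep}$ for the energy error is false even with $\epsilon_{\rm prep}=0$. Take a two-level spectrum and a state with ground weight $g<1$. Then
\[
e(\Phi_\tau\rho)=\frac{q}{g+(1-g)q}\,e(\rho)>q\,e(\rho).
\]
In general $e(\Phi_\tau\rho)\le \frac{q}{g+(1-g)q}\,e(\rho)$, with equality in this example. The reason is that renormalization divides by the surviving weight $g+(1-g)q<1$. Your second remedy, an inductive bound $g\ge\gamma/2$, does not restore the factor $q$. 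It only gives a contraction factor $1-\Theta(\gamma\tau\Delta)$ in the relevant regime $\tau\Delta\ll 1$. Run through the whole iteration, that costs an extra $1/\gamma$ both in $S$ and in the error floor, so it does not reproduce the theorem. For the same reason, your remark that the hypothesis on $\epsilon_{\rm prep}$ is ``exactly'' the stationary error $C\|\hat H\|\epsilon_{\rm prep}/(1-q)$ is not justified.

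\textbf{The fix.} Your first remedy, the tangent ratio, is the right one, and it is exactly the paper's argument. Set $g(\rho)=\Tr(P_0\rho)$ and $r(\rho)=(1-g(\rho))/g(\rho)$, the squared tangent. The ideal normalized map satisfies $r(\Phi_\tau\rho)\le q\,r(\rho)$ with no dependence on $g$.

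The overlap lower bound is needed only to make $g\mapsto(1-g)/g$ Lipschitz. Since $|g(\sigma)-g(\rho)|\le\tfrac12\|\sigma-\rho\|_1$, this gives $r(\widetilde\Phi_\tau\rho)\le r(\Phi_\tau\rho)+C_\gamma\epsilon_{\rm prep}$ with $C_\gamma=\mathcal O(\gamma^{-2})$. Hence $r_{s+1}\le q\,r_s+C_\gamma\epsilon_{\rm prep}$, and unrolling gives
\[
r_S\le q^S\frac{1-\gamma}{\gamma}+\frac{C_\gamma\epsilon_{\rm prep}}{1-q},
\]
which supplies both the contraction term and the floor from a single recursion. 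Only at the end do you convert, using $\Tr(\hat H\rho)-E_0\le C_H\,r(\rho)$.

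Your induction keeping $g_s\ge\gamma/2$ fits naturally here. With $R=1/\gamma$, it closes because $r_{s+1}\le qR+C_\gamma\epsilon_{\rm prep}\le R$ once $\epsilon_{\rm prep}\lesssim(1-q)\gamma$. This actually makes rigorous the paper's brief compactness remark. The resulting floor is $C_HC_\gamma\epsilon_{\rm prep}/(1-q)$, so the stated hypothesis on $\epsilon_{\rm prep}$ holds only with a $\gamma$-dependent constant hidden in the $\mathcal O$, as in the paper.
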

\vspace{-0.05in}
The theorem yields a coherent Feynman--Kac ground-state preparation algorithm with the same characteristic linear scale-over-gap dependence as spectral-filtering methods, but built from one-body segment primitives rather than a global block-encoding input oracle \cite{Berry2015TaylorLCU,LinTong2020nearoptimalground,DongLinTong2022QETU}. In addition, the classical sign problem is recast into the normalization of the projected state and the associated postselection amplitude, suggesting that coherent overlap estimation and amplitude-amplification based variants will soften the long-time sampling bottleneck.

The fault-tolerant construction assumes coherent segment chaining, ancilla restoration, oblivious amplitude amplification, and controlled unitaries for coherent energy estimation and trajectory superposition, which are beyond current near-term capabilities. We therefore adopt an implementable mode that keeps the same slice kernel but samples the Magnus randomness classically, while the circuit layer evaluates a small deterministic LCU over retained system-only terms. This removes coherent trajectory superposition and persistent ancilla management, yielding shallower circuits suitable for present hardware.

\subsection{System-only implementation via LCU}\label{sec:lcu} 

\begin{figure}[htp] 
\centering 
\includegraphics[scale=0.55]{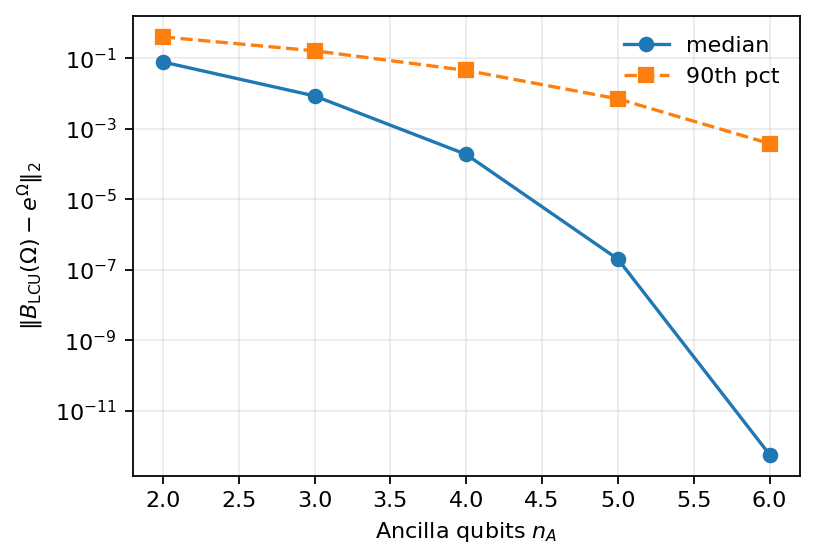} 
\vspace{-6pt}
\caption{One-step LCU  error 
versus ancilla size \(n_A\)  over 100 sampled order-2 Magnus slices \(\Omega_k\). The median and 90th percentile both decrease rapidly with \(n_A\), showing a modest ancilla is sufficient to make the approximation accurate.} 
\vspace{-6pt}
\label{fig:lcu-error-vs-nA} 
\end{figure}
The fault-tolerant construction assumes coherent segment chaining, ancilla restoration, oblivious amplitude amplification, and controlled unitaries for coherent energy estimation and trajectory superposition, which are beyond current hardware capabilities. 
One key observation is that the dilation step can be rewritten as a short linear combination of system-only unitaries by diagonalizing the ancilla generator \(iF\). Writing \( iF=\sum_{j=1}^{d_A}\omega_j\ketbra{\chi_j}, \, \omega_j\in\mathbb R, \) each slice is reduced to $U_{n,j}=e^{-i(H_n+\omega_j K_n)}$ and $
B_{\mathrm seg}=\sum_{j=1}^{d_A} c_j\,U_{n,j}U_{n-1,j}\cdots U_{1,j},  
$
where \(c_j=\langle l|\chi_j\rangle\langle\chi_j|r\rangle\). 
Thus the ancilla-assisted unitary dilation of a nonunitary segment is replaced by a deterministic LCU over the retained indices \(j\), and each term \(U_{k,j}\) acts only on the system register. Since \(U_{k,j}\) is generated by a quadratic fermionic Hamiltonian, it can be compiled as a fermionic Gaussian circuit, rather than through an additional Trotter decomposition, substantially reducing the two-qubit depth relative to a direct circuit for the  dilated Hamiltonian \(\widetilde H\). For near-term implementations, the Magnus randomness is sampled classically as in AFQMC, while the LCU over \(j\) is evaluated deterministically. This removes the dilation register from the executed circuit and trades ancilla overhead for additional classical preprocessing and a moderate increase in the number of measured system-only circuits.   This passage from dilation to an LCU of system-only Hamiltonians is analogous to the relation between Schr\"odingerization \cite{jin2022quantum} and LCHS \cite{ALL23}.

In the near-term mode, the sampling overhead grows linearly with the number of classical trajectories \(N_{\rm traj}\), the number \(r\) of LCU terms, and the number \(N_{\rm obs}\) of observables entering the mixed estimator. For a fixed shot budget per circuit, this gives a total measurement cost proportional to \(N_{\rm traj} r N_{\rm obs}\). For a system with \(n_{\rm orb}\) spin-orbitals, each retained segment unitary is a quadratic-fermion propagator and can be compiled directly as a fermionic Gaussian circuit using \(\mathcal O(n_{\rm orb}^2)\) Givens rotations, avoiding Trotter decomposition and helping keep the coherent depth low on near-term hardware. 
\section{Numerical test}

We benchmark the stochastic Magnus+dilation/LCU construction on the half-filled two-site Hubbard dimer in a four-spin-orbital Jordan--Wigner encoding, initialized from the Hartree--Fock determinant \(|1001\rangle\). The code uses \(t=1\), \(U=4\), and the particle-hole-shifted interaction \(U\sum_i (n_{i\uparrow}-\tfrac12)(n_{i\downarrow}-\tfrac12)\). The Hubbard dimer is used here as a minimal but fully controlled benchmark for validating the complete AFQMC pipeline, rather than as a claim of scale or advantage.  Classical checks use sampled AFQMC trajectories, while the hardware study uses mixed estimators built from compiled circuits. To reduce circuit depth, the circuit synthesis uses the reordered orbital basis \((1\uparrow,2\uparrow,1\downarrow,2\downarrow)\), for which the one-body propagator is spin-block diagonal.
We first examine the LCU kernel itself. Figure~\ref{fig:lcu-error-vs-nA} shows the one-step LCU approximation error as a function of ancilla size. Over sampled order-2 Magnus slices, the median error drops rapidly with \(n_A\),  while the 90th-percentile error decreases  substantially as well. This rapid convergence shows that only a modest ancilla is needed to control the per-segment dilation error in the constant-success regime. 

We next validate the stochastic Magnus propagation at the classical level. Figure~\ref{fig:w1w2} compares AFQMC energies obtained from unitary dilations of the first- and second-order Magnus discretizations against deterministic imaginary-time references from \(\exp(-\tau \hat H)\). For $\dt=0.05$, the second-order scheme follows both the mixed-estimator and Rayleigh-quotient references more closely over \(\tau\le 0.3\), while the first-order scheme shows a visibly larger time-discretization bias. Thus the pathwise propagation captures the intended imaginary-time relaxation, and the weak-order improvement translates directly into more accurate physical observables.

\begin{figure}[htp]
    \centering
    \includegraphics[scale=0.4]{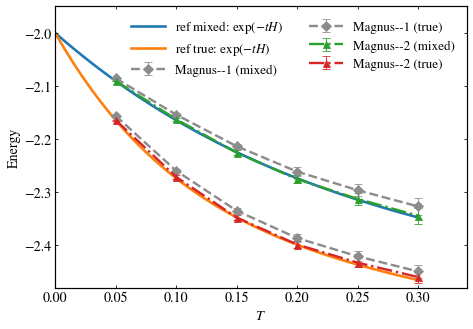}
    \caption{Energy estimates versus imaginary time \(T\) comparing Magnus--1 and Magnus--2 AFQMC propagation. Dashed and dash--dot curves with markers show mixed and true (Rayleigh) estimators with jackknife error bars.  
    }
    \label{fig:w1w2}
\end{figure}

We then test segmented propagation. Mirroring the fault-tolerant construction in Theorem~\ref{thm:main}: each segment prepares a normalized approximation to \(\ket{\psi(\tau)}\), and the next segment proceeds with fresh Magnus slices.  Figure~\ref{fig:w2seg2} compares the two-segment ME-2 dilation simulation with the deterministic imaginary-time reference. The mixed estimator closely follows the corresponding mixed-reference curve, while the exact estimator remains consistent with the Rayleigh quotient, showing that the propagated ensemble continues to approximate the intended projected state after segment chaining. 

\begin{figure}[htp]
    \centering
    \includegraphics[scale=0.45]{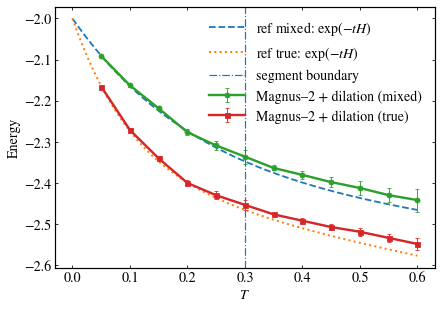}
    \caption{Energy estimates versus imaginary time \(\tau\) for the two-segment Magnus--2 dilation simulation. Circles (squares) show the mixed (true) estimator with path-sampling error bars; dashed (dotted) curves denote the corresponding deterministic references from \(\exp(-\tau\hat H)\). The vertical dash--dot line marks the segment boundary.
    }
    \label{fig:w2seg2}
\end{figure}

Finally, we assess the near-term viability of the implementation on IBM-Boston. We use the trial state \(\ket{\psi_T}\) as the reference state in the mixed estimator, sample \(N_{\mathrm{traj}}=40\) trajectories, and use 400 shots per circuit. For each trajectory, the AFQMC segment map is compiled as a fermionic Gaussian circuit via Givens rotations. The hardware runs use IBM Runtime error-suppression and mitigation options, including resilience level 1 and dynamical decoupling. The resulting circuits use 5 logical qubits and transpile to 7 active physical qubits, with depth 271 and two-qubit count 77 CZ gates. The emulator follows the exact trend qualitatively, and the IBM results remain in the same energy window, although with larger variance. 

\begin{figure}[htp]
    \centering
    \includegraphics[scale=0.4]{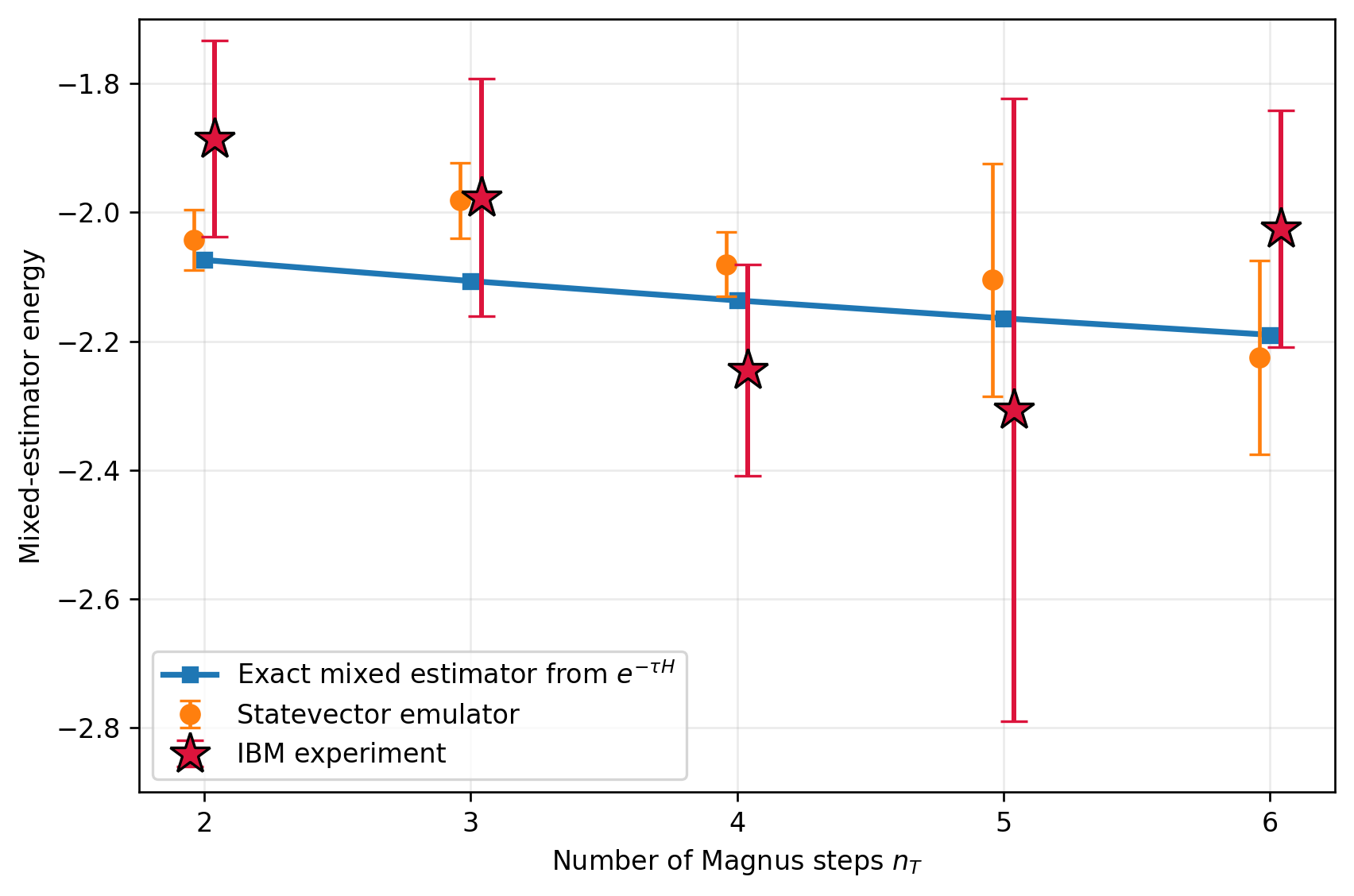}
    \caption{Mixed-estimator energies versus the number of Magnus steps \(n_T\). Blue squares: exact mixed estimator. Orange circles: statevector emulation of the compiled  circuits. Red stars: IBM results. Error bars denote bootstrap errors.}
    \label{fig:ibm}
\end{figure}

\section{Summary and Discussion}


We introduced a quantum-native AFQMC framework in which the Feynman-Kac projector dynamics is discretized into high-order stochastic Magnus slices and each nonunitary one-body step is implemented by dilation or LCU, preserving Slater-to-Slater structure and yielding hardware-executable fermionic-Gaussian circuits. While the framework does not by itself remove the classical sign bottleneck in the sampled near-term regime, it provides a gate-level realization of projector dynamics and a concrete route for benchmarking on quantum hardware. The sign problem appears directly through the success probability and estimator variance, suggesting extensions based on phaseless controls and overlap estimation \cite{TroyerWiese2005,ZhangKrakauer2003_Phaseless,ShiZhang2021_AFQMC_Review}. The same framework may be extended to finite-temperature simulation and connects naturally to recent Gaussian/Lindblad state-preparation ideas \cite{Motta2020NatPhys,Sun2021,LiZhanLin2024_DissipativeGSP,Zhan2025_RapidDissipativeGSP}.
\emph{Acknowledgements.--}
This work was supported by NSF Grant DMS-2411120. This research used resources of the National Energy Research Scientific Computing Center (NERSC) under Contract No.~DE-AC02-05CH11231 and award DDR-ERCAP0035683.
\bibliographystyle{apsrev4-2}
\bibliography{ref}

@article{Hubbard1959,
  author    = {J. Hubbard},
  title     = {Calculation of Partition Functions},
  journal   = {Phys. Rev. Lett.},
  volume    = {3},
  pages     = {77},
  year      = {1959}
}

@article{Stratonovich1958,
  author    = {R. L. Stratonovich},
  title     = {On a Method of Calculating Quantum Distribution Functions},
  journal   = {Sov. Phys. Dokl.},
  volume    = {2},
  pages     = {416},
  year      = {1958}
}

@article{liu2021probabilistic,
  title={Probabilistic nonunitary gate in imaginary time evolution},
  author={Liu, Tong and Liu, Jin-Guo and Fan, Heng},
  journal={Quantum Information Processing},
  volume={20},
  number={6},
  pages={204},
  year={2021},
  publisher={Springer}
}

@article{LinTong2020nearoptimalground,
  title   = {Near-optimal ground state preparation},
  author  = {Lin, Lin and Tong, Yu},
  journal = {Quantum},
  volume  = {4},
  pages   = {372},
  year    = {2020},
  doi     = {10.22331/q-2020-12-14-372},
  eprint  = {2002.12508},
  archivePrefix = {arXiv},
  primaryClass  = {quant-ph}
}

@article{DongLinTong2022QETU,
  title   = {Ground state preparation and energy estimation on early fault-tolerant quantum computers via quantum eigenvalue transformation of unitary matrices},
  author  = {Dong, Yulong and Lin, Lin and Tong, Yu},
  journal = {PRX Quantum},
  volume  = {3},
  pages   = {040305},
  year    = {2022},
  doi     = {10.1103/PRXQuantum.3.040305},
  eprint  = {2204.05955},
  archivePrefix = {arXiv},
  primaryClass  = {quant-ph}
}

@article{kamm2021stochastic,
  title={On the stochastic Magnus expansion and its application to SPDEs},
  author={Kamm, Kevin and Pagliarani, Stefano and Pascucci, Andrea},
  journal={Journal of Scientific Computing},
  volume={89},
  number={3},
  pages={56},
  year={2021},
  publisher={Springer}
}

@article{TroyerWiese2005,
  author    = {M. Troyer and U.-J. Wiese},
  title     = {Computational Complexity and Fundamental Limitations to Fermionic Quantum Monte Carlo Simulations},
  journal   = {Phys. Rev. Lett.},
  volume    = {94},
  pages     = {170201},
  year      = {2005}
}

@book{kloeden1992numerical,
  author    = {Peter E. Kloeden and Eckhard Platen},
  title     = {Numerical solution of stochastic differential equations},
  publisher = {Springer},
  year      = {1992}
}

@article{jin2022quantum,
  author    = {Shi Jin and Nana Liu and Yue Yu},
  title     = {Quantum simulation of partial differential equations via {Schr\"odingerization}},
  journal   = {Phys. Rev. Lett.},
  volume    = {133},
  pages     = {230602},
  year      = {2024}
}

@misc{li2025linear,
  author = {Xiantao Li},
  title = {From Linear Differential Equations to Unitaries: A Moment-Matching Dilation Framework with Near-Optimal Quantum Algorithms},
  year = {2025},
  eprint = {2507.10285},
  archivePrefix = {arXiv}
}

@article{Rall2021,
  title = {Faster coherent quantum algorithms for phase, energy, and amplitude estimation},
  author = {Rall, Patrick},
  journal = {Quantum},
  volume = {5},
  pages = {491},
  year = {2021},
  month = {jul},
  publisher = {Verein zur F{\"{o}}rderung des Open Access Publizierens in den Quantenwissenschaften},
  doi = {10.22331/q-2021-07-08-491},
  url = {https://doi.org/10.22331/q-2021-07-08-491}
}

@article{McClean2016_VQETheory,
  author  = {McClean, Jarrod R. and Romero, Jonathan and Babbush, Ryan and Aspuru-Guzik, Al{\'a}n},
  title   = {The theory of variational hybrid quantum-classical algorithms},
  journal = {New Journal of Physics},
  volume  = {18},
  number  = {2},
  pages   = {023023},
  year    = {2016},
  doi     = {10.1088/1367-2630/18/2/023023},
  eprint  = {1509.04279},
  archivePrefix = {arXiv},
  primaryClass  = {quant-ph}
}

@article{Campbell2019,
  title = {Random Compiler for Fast Hamiltonian Simulation},
  author = {Campbell, Earl},
  journal = {Phys. Rev. Lett.},
  volume = {123},
  issue = {7},
  pages = {070503},
  numpages = {6},
  year = {2019},
  month = {Aug},
  publisher = {American Physical Society},
  doi = {10.1103/PhysRevLett.123.070503}
}

@article{LiZhanLin2024_DissipativeGSP, title = {Dissipative ground state preparation in ab initio electronic structure theory}, author = {Li, Hao-En and Zhan, Yongtao and Lin, Lin}, journal = {arXiv preprint arXiv:2411.01470}, year = {2024}, eprint = {2411.01470}, archivePrefix = {arXiv}, primaryClass = {quant-ph}, url = {https://arxiv.org/abs/2411.01470} }

@article{Zhan2025_RapidDissipativeGSP, title = {Rapid quantum ground state preparation via dissipative dynamics}, author = {Zhan, Yongtao and Ding, Zhiyan and Huhn, Jakob and Gray, Johnnie and Preskill, John and Chan, Garnet Kin-Lic and Lin, Lin}, journal = {arXiv preprint arXiv:2503.15827}, year = {2025}, eprint = {2503.15827}, archivePrefix = {arXiv}, primaryClass = {quant-ph}, url = {https://arxiv.org/abs/2503.15827} }

@article{Jiang2018_FermionicGaussian,
  author  = {Jiang, Zhang and Sung, Kevin J. and Kechedzhi, Kostyantyn and Smelyanskiy, Vadim N. and Boixo, Sergio},
  title   = {Quantum Algorithms to Simulate Many-Body Physics of Correlated Fermions},
  journal = {Physical Review Applied},
  volume  = {9},
  number  = {4},
  pages   = {044036},
  year    = {2018},
  doi     = {10.1103/PhysRevApplied.9.044036},
  eprint  = {1711.05395},
  archivePrefix = {arXiv},
  primaryClass  = {quant-ph}
}

@article{Kivlichan2018_LinearDepthSlater,
  author  = {Kivlichan, Ian D. and McClean, Jarrod and Wiebe, Nathan and Gidney, Craig and Aspuru-Guzik, Al{\'a}n and Chan, Garnet Kin-Lic and Babbush, Ryan},
  title   = {Quantum Simulation of Electronic Structure with Linear Depth and Connectivity},
  journal = {Physical Review Letters},
  volume  = {120},
  number  = {11},
  pages   = {110501},
  year    = {2018},
  doi     = {10.1103/PhysRevLett.120.110501},
  eprint  = {1711.04789},
  archivePrefix = {arXiv},
  primaryClass  = {quant-ph}
}

@article{Danilov2025_SQDphAFQMC,
  title         = {Enhancing the accuracy and efficiency of sample-based quantum diagonalization with phaseless auxiliary-field quantum Monte Carlo},
  author        = {Danilov, Don and Robledo-Moreno, Javier and Sung, Kevin J. and Motta, Mario and Shee, James},
  journal       = {arXiv preprint arXiv:2503.05967},
  year          = {2025},
  eprint        = {2503.05967},
  archivePrefix = {arXiv},
  primaryClass  = {quant-ph},
  url           = {https://arxiv.org/abs/2503.05967}
}

@article{ALL23,
  author    = {Dong An and Jin-Peng Liu and Lin Lin},
  title     = {Linear combination of {H}amiltonian simulation for nonunitary dynamics with optimal state preparation cost},
  journal   = {Physical Review Letters},
  volume    = {131},
  number    = {15},
  pages     = {150603},
  year      = {2023},
  note      = {arXiv:2303.01029}
}

@article{jin2025quantum,
  author    = {Shi Jin and Nana Liu and Wei Wei},
  title     = {Quantum algorithms for stochastic differential equations: A schr{\"o}dingerisation approach},
  journal   = {Journal of Scientific Computing},
  volume    = {104},
  number    = {2},
  pages     = {1--32},
  year      = {2025}
}

@article{Sun2021,
  author    = {S.-N. Sun and M. Motta and R. N. Tazhigulov and A. T. K. Tan and G. K.-L. Chan and A. J. Minnich},
  title     = {Quantum Imaginary Time Evolution with a Combined Variational and Trotter Decomposition},
  journal   = {PRX Quantum},
  volume    = {2},
  pages     = {010317},
  year      = {2021}
}

@article{Huggins2022,
  author    = {W. J. Huggins and B. A. O'Gorman and N. C. Rubin and D. R. Reichman and R. Babbush and J. Lee},
  title     = {Unbiasing fermionic quantum Monte Carlo with a quantum computer},
  journal   = {Nature},
  volume    = {603},
  pages     = {416--420},
  year      = {2022}
}

@article{McArdle2019,
  author    = {S. McArdle and T. Jones and S. Endo and Y. Li and S. C. Benjamin and X. Yuan},
  title     = {Variational ansatz-based quantum simulation of imaginary time evolution},
  journal   = {npj Quantum Inf.},
  volume    = {5},
  pages     = {75},
  year      = {2019}
}

@article{Mao2023MeasurementBasedDeterministicITE,
  title   = {Measurement-Based Deterministic Imaginary Time Evolution},
  author  = {Mao, Yuping and Chaudhary, Manish and Kondappan, Manikandan and Shi, Junheng and Ilo-Okeke, Ebubechukwu O. and Ivannikov, Valentin and Byrnes, Tim},
  journal = {Physical Review Letters},
  volume  = {131},
  number  = {11},
  pages   = {110602},
  year    = {2023},
  doi     = {10.1103/PhysRevLett.131.110602}
}

@article{Motta2020NatPhys,
  author  = {Motta, Mario and Sun, Chong and Tan, Adrian T. K. and O'Rourke, Matthew J. and Ye, Erika and Minnich, Austin J. and Brand{\~a}o, Fernando G. S. L. and Chan, Garnet Kin-Lic},
  title   = {Determining eigenstates and thermal states on a quantum computer using quantum imaginary time evolution},
  journal = {Nature Physics},
  year    = {2020},
  volume  = {16},
  pages   = {205--210},
  doi     = {10.1038/s41567-019-0704-4}
}

@article{Lee2021PRXQTHC,
  author  = {Lee, Joonho and Berry, Dominic W. and Gidney, Craig and Motta, Mario and McClean, Jarrod R. and Babbush, Ryan},
  title   = {Even More Efficient Quantum Computations of Chemistry Through Tensor Hypercontraction},
  journal = {PRX Quantum},
  year    = {2021},
  volume  = {2},
  number  = {3},
  pages   = {030305},
  doi     = {10.1103/PRXQuantum.2.030305}
}

@article{Berry2015TaylorLCU,
  author  = {Berry, Dominic W. and Childs, Andrew M. and Cleve, Richard and Kothari, Robin and Somma, Rolando D.},
  title   = {Simulating Hamiltonian Dynamics with a Truncated Taylor Series},
  journal = {Physical Review Letters},
  year    = {2015},
  volume  = {114},
  number  = {9},
  pages   = {090502},
  doi     = {10.1103/PhysRevLett.114.090502}
}

@article{Hirsch1983_AFQMC,
  title = {Discrete Hubbard-Stratonovich transformation for fermion lattice systems},
  author = {Hirsch, J. E.},
  journal = {Phys. Rev. B},
  volume = {28},
  number = {7},
  pages = {4059--4061},
  year = {1983},
  doi = {10.1103/PhysRevB.28.4059}
}

@article{ZhangCarlsonGubernatis1997_CPMC,
  title = {Constrained path Monte Carlo method for fermion ground states},
  author = {Zhang, Shiwei and Carlson, J. and Gubernatis, J. E.},
  journal = {Phys. Rev. B},
  volume = {55},
  number = {12},
  pages = {7464--7477},
  year = {1997},
  doi = {10.1103/PhysRevB.55.7464}
}

@article{ZhangKrakauer2003_Phaseless,
  title = {Quantum Monte Carlo Method Using Phase-Free Random Walks with Slater Determinants},
  author = {Zhang, Shiwei and Krakauer, Henry},
  journal = {Phys. Rev. Lett.},
  volume = {90},
  pages = {136401},
  year = {2003},
  doi = {10.1103/PhysRevLett.90.136401}
}

@article{MottaZhangChan2019_AFQMCsym,
  author  = {Motta, Mario and Zhang, Shiwei and Chan, Garnet Kin-Lic},
  title   = {Hamiltonian symmetries in auxiliary-field quantum Monte Carlo calculations for electronic structure},
  journal = {Physical Review B},
  volume  = {100},
  number  = {4},
  pages   = {045127},
  year    = {2019},
  doi     = {10.1103/PhysRevB.100.045127},
  eprint  = {1905.00511},
  archivePrefix = {arXiv},
  primaryClass  = {physics.comp-ph}
}

@article{Amsler2023_AFQMCtrial,
  author  = {Amsler, Michael and Deglmann, Peter and Degroote, Matthias and
             Kaicher, Matthias P. and Kiser, Markus and K{\"u}hn, Martin and
             Kumar, Chetan and Maier, Andreas and Samsonidze, George and
             Schroeder, Andreas and Streif, Michael and Vodola, Davide and
             Wever, Christoph},
  title   = {Classical and quantum trial wave functions in auxiliary-field quantum Monte Carlo applied to oxygen allotropes and a CuBr$_2$ model system},
  journal = {The Journal of Chemical Physics},
  volume  = {159},
  number  = {4},
  pages   = {044119},
  year    = {2023},
  doi     = {10.1063/5.0146934}
}

@article{Huang2024_QCQMC,
  author  = {Huang, Benchen and Chen, Yi-Ting and Gupt, Brajesh and
             Suchara, Martin and Tran, Anh and McArdle, Sam and Galli, Giulia},
  title   = {Evaluating a quantum-classical quantum Monte Carlo algorithm with Matchgate shadows},
  journal = {Physical Review Research},
  volume  = {6},
  number  = {4},
  pages   = {043063},
  year    = {2024},
  doi     = {10.1103/PhysRevResearch.6.043063}
}

@article{ShiZhang2021_AFQMC_Review,
  title = {Auxiliary-field quantum Monte Carlo: An overview},
  author = {Shi, Hao and Zhang, Shiwei},
  journal = {J. Chem. Phys.},
  volume = {154},
  number = {2},
  pages = {024107},
  year = {2021},
  doi = {10.1063/5.0036522}
}

@article{Peruzzo2014_VQE,
  title = {A variational eigenvalue solver on a photonic quantum processor},
  author = {Peruzzo, A. and McClean, J. and Shadbolt, P. and Yung, M.-H. and Zhou, X.-Q. and Love, P. J. and Aspuru-Guzik, A. and O'Brien, J. L.},
  journal = {Nat. Commun.},
  volume = {5},
  pages = {4213},
  year = {2014},
  doi = {10.1038/ncomms5213}
}

@article{Kandala2017_IBM_HEVQE,
  title = {Hardware-efficient variational quantum eigensolver for small molecules and quantum magnets},
  author = {Kandala, Abhinav and Mezzacapo, Antonio and Temme, Kristan and Takita, Maika and Brink, Markus and Chow, Jerry M. and Gambetta, Jay M.},
  journal = {Nature},
  volume = {549},
  pages = {242--246},
  year = {2017},
  doi = {10.1038/nature23879}
}

@article{Babbush2018_PRX_Qubitization,
  title = {Encoding Electronic Spectra in Quantum Circuits with Linear {T} Complexity},
  author = {Babbush, Ryan and Gidney, Craig and Berry, Dominic W. and Wiebe, Nathan and McClean, Jarrod and Paler, Alexandru and Fowler, Austin and Neven, Hartmut},
  journal = {Phys. Rev. X},
  volume = {8},
  pages = {041015},
  year = {2018},
  doi = {10.1103/PhysRevX.8.041015}
}

@misc{wu2026universal,
  title = {Universal Dilation of Linear It{\^o} {SDE}s: Quantum Trajectories and Lindblad Simulation of Second Moments},
  author = {Wu, Hsuan-Cheng and Li, Xiantao},
  year = {2026},
  eprint = {2601.05928},
  archivePrefix = {arXiv},
  primaryClass = {quant-ph}
}

@article{BauerBravyiMottaChan2020_ChemRev,
  title   = {Quantum Algorithms for Quantum Chemistry and Quantum Materials Science},
  author  = {Bauer, Bela and Bravyi, Sergey and Motta, Mario and Chan, Garnet Kin-Lic},
  journal = {Chem. Rev.},
  volume  = {120},
  number  = {22},
  pages   = {12685--12717},
  year    = {2020},
  doi     = {10.1021/acs.chemrev.9b00829}
}

\onecolumngrid
\appendix

\newpage 

\begin{center}
   \bf \large Auxiliary-Field Quantum Monte Carlo on Quantum Hardware via Unitary Dilation.
\end{center}

\section{The moment-fulfilling dilation method}
Suppose we are simulating a non-unitary evolution with generator $\Omega$ with both $H$ and $K$ being Hermitian. 
Our moment-fulfilling framework is built on the projected dilated propagator
\begin{equation}\label{dil}
(\bra{l}\otimes I)\,e^{-it\widetilde H}\,(\ket{r}\otimes I),
\qquad
\widetilde H := I\otimes H + (iF)\otimes K,
\end{equation}
which recovers the target non-unitary evolution exactly when the moment identities
\[
\langle l|F^k|r\rangle =1,\qquad \forall k\ge 0,
\]
hold. The proof is algebraic \cite{li2025linear}: expand the unitary $e^{-it\widetilde H}$ (or its time-ordered Dyson series) and use the
moment identities to collapse every occurrence of $F^k$ inside the matrix element, thereby reproducing the Taylor/Dyson
series of $e^{t(-iH+K)}$ term-by-term.

In the finite-dimensional tight-binding realization used here, $F$ is approximated by the skew-Hermitian chain
$\theta F_h$ on grid points $p_j=jh$ with $h=1/M$ and the trapezoid-weight. When $\theta=2$, we have 
\begin{equation}\label{Fh}
  \begin{aligned}
    (F_h \bm v)_0 &= \tfrac{1}{2\sqrt2}  v_{1},\\
    (F_h \bm v)_1 &= \tfrac34  v_{2}-\tfrac1{2\sqrt2}  v_{0},\\
    (F_h \bm v)_i &= \tfrac{p_{i+1}+p_i}{4h}  v_{i+1}
              - \tfrac{p_i+p_{i-1}}{4h}  v_{i-1},
              \quad 2\le i\le M-1,\\
    (F_h \bm v)_M &= -\frac{p_{M-1}+p_M}{4h}  v_{M-1}.
  \end{aligned}
\end{equation}
In addition, the vector $\ket{r_h}$ is given by,
\begin{equation}\label{rh}
\ket{r_h} = \frac{h}{2}\ket{0}+ h \sum_{i=1}^{M-1} \ket{i} + \frac{h}{2}\ket{M}.
\end{equation}
When implemented with an ancilla register, we choose $n_A= \log_2 (M+1)$. 

A key property is that $\ket{r_h}$ is an \emph{interior} eigenmode of $\theta F_h$ up to a rank-one defect at the
right boundary:
\begin{equation}\label{eq:boundary_defect}
(\theta F_h)\ket{r_h}-\ket{r_h} = \alpha\,\ket{M},
\end{equation}
for some scalar $\alpha$ \cite{li2025linear}. Equivalently, the residual vanishes on all interior nodes $j<M$ and is supported only at the terminal site $\ket{M}$.

The dilated evolution in \eqref{dil} will be followed by post-selection to reduce the system register to follow the non-unitary evolution. 
Let $\Pi_{\rm win}$ denote projection onto a prefront window that excludes the boundary, e.g.
\[
\Pi_{\rm win}=\sum_{j=0}^{j^\ast}\ketbra{j},\qquad j^\ast = \frac{M}2.
\]
Because $F_h$ is nearest-neighbor on the ancilla chain, any component injected at $\ket{M}$ can propagate left by at
most one site per application of $F_h$. Consequently, for every $k\le m:=M-j^\ast$ the boundary defect in
\eqref{eq:boundary_defect} cannot reach the window, and we have the \emph{windowed moment identities}
\begin{equation}\label{eq:window_moments}
\Pi_{\rm win}(\theta F_h)^k\ket{r_h}=\Pi_{\rm win}\ket{r_h},
\qquad\text{hence}\qquad
\langle l_h|(\theta F_h)^k|r_h\rangle =1,\quad k=0,1,\dots,m,
\end{equation}
where $\langle l_h|$ is any evaluation functional supported inside the window,
\begin{equation}\label{lh}
    \bra{l_h} = \frac{\sum_{i=0}^{j_\ast} \bra{i}  }{\sum_{i=0}^{j_\ast} \braket{i}{r_h} }. 
\end{equation}

Together, \cref{Fh,rh,lh} constitute the elements needed to apply the dilation scheme \eqref{dil}.

\bigskip 
Define the target slice map $B(t):=e^{t(-iH+K)}$, and let
$\widetilde B(t)$ be the windowed dilation kernel obtained by applying $(l_h|\otimes I)$ and $(|r_h)\otimes I)$ together
with the window projection $\Pi_{\rm win}$.
Expanding both $B(t)$ and $\widetilde B(t)$ in Taylor series, the finite moment identities
\eqref{eq:window_moments} imply that the two series \emph{agree through order $m$}. Therefore the mismatch is controlled by
the Taylor remainder:
\begin{equation}\label{eq:trunc_remainder}
\|B(t)-\widetilde B(t)\|
\;\le\;
\sum_{k=m+1}^{\infty}\frac{(t\|\Omega\|)^k}{k!}
\;\le\;
\frac{(t\|\Omega\|)^{m+1}}{(m+1)!}\,e^{t\|\Omega\|},
\end{equation}
and in the constant-success regime where $t\|\Omega\|=\co(1)$ this simplifies to the advertised factorial scaling
\[
\|B(t)-\widetilde B(t)\| \;\lesssim\; \frac{(t\|\Omega\|)^{m+1}}{(m+1)!}.
\]
With the symmetric choice $j^\ast=M/2$ (so $m=M/2$), we obtain
\[
\|B(t)-\widetilde B(t)\|
\;\lesssim\;
\frac{(t\|\Omega\|)^{M/2+1}}{(M/2+1)!},
\]
which is the finite-moment analogue of the exact all-orders recovery in the infinite-dimensional moment-matching dilation. For the time-dependent case $H(t), K(t)$, this procedure can be applied with Dyson series expansions \cite{li2025linear}.

\section{The Derivation of the operator bound for HS/Magnus slices}\label{sm:Kmax_bound}

Our dilation primitive operates in a constant-success/window regime when the segment length $\tau$ satisfies
$\tau K_{\max}=\co(1)$, where $K_{\max}:=\max_{t\in[0,\tau]}\|K(t)\|$ (or $K_{\max}:=\max_n\|K_n\|$ in a discrete-time
implementation) \cite{li2025linear}. A sufficient condition is the explicit
bound $\tau\le (e\theta K_{\max})^{-1}$ \cite{li2025linear}.

Consider the Stratonovich SDE in the Feynman-Kac formula 
\[
d\ket{\phi_t}=\Big(-\hat H_1\,dt+\sum_{\gamma=1}^m\sqrt{2}\,\hat L_\gamma\circ dW_\gamma(t)\Big)\ket{\phi_t},
\]
and the associated one-step Magnus exponent $\Omega_n$ over $[t_n,t_{n+1}]$.

For weak order 1,
\[
\Omega^{(1)}_n=-\Delta t\,\hat H_1+\sum_{\gamma=1}^m\sqrt{2}\,\hat L_\gamma\,\Delta W_{\gamma,n},
\]
which is Hermitian when $\hat H_1,\hat L_\gamma$ are Hermitian and $\Delta W_{\gamma,n}\in\mathbb{R}$.
For weak order 2 with commuting diffusion fields, a correction term appears
\[
\Omega^{(2)}_n=\Omega^{(1)}_n-\sum_{\gamma=1}^m[\hat H_1,\hat L_\gamma]\,J_{0\gamma,n}.
\]
Since $[\hat H_1,\hat L_\gamma]$ is skew-Hermitian and $J_{0\gamma,n}\in\mathbb{R}$, the second-order term is purely
skew-Hermitian and therefore does not change the Hermitian part, $K_n$, of $\Omega_n$. Hence Magnus--1 and Magnus--2 have the
same contractive component:
\[
K_n:=-\Delta t\,\hat H_1+\sum_{\gamma=1}^m\sqrt{2}\,\hat L_\gamma\,\Delta W_{\gamma,n}.
\]

By the triangle inequality,
\[
\|K_n\|\le \Delta t\,\|\hat H_1\|+\sum_{\gamma=1}^m\sqrt{2}\,\|\hat L_\gamma\|\,|\Delta W_{\gamma,n}|.
\]
If we implement $\Delta W_{\gamma,n}$ via a bounded Hutchinson/Rademacher approximation \cite{kloeden1992numerical}
$\Delta W_{\gamma,n}=\sqrt{\Delta t}\,\xi_{\gamma,n}$ with $|\xi_{\gamma,n}|\leq \sqrt{3}$, then a pathwise uniform bound is found,
\[
\|K_n\|\le \Delta t\,\|\hat H_1\|+\sqrt{6\Delta t}\,\sum_{\gamma=1}^m\|\hat L_\gamma\|.
\]
This provides an explicit bound that determines the time interval to which the dilation method can be applied with constant success. For example, for weak order 1 scheme applied until $t=1$, we should have 
\[
\norm{\Omega}:=\norm{\Omega^{(1)}_n}/\dt \leq \|\hat H_1\|+\sqrt{6/\Delta t}\,\sum_{\gamma=1}^m\|\hat L_\gamma\|.
\]

\section{LCU coefficient normalization and constant success}\label{app:lcu_l1}

We recall the discrete spectral decomposition of the ancilla generator and the resulting LCU form.
Let $F_h=-i\sum_{j=0}^{M}\omega_j\ketbra{\phi_j}$ with $\omega_j\in\mathbb{R}$ and $\{\ket{\phi_j}\}$ orthonormal.
Then the dilated propagator factorizes in this eigenbasis and yields a system-only expansion
\begin{equation}\label{eq:lcu_expansion_app}
B(t)=\sum_{j=0}^{M} c_j\,U_j(t),\qquad
U_j(t):=\mathcal{T}\exp\!\left(-i\int_0^t (H(s)+\omega_jK(s))\,ds\right),
\end{equation}
with coefficients
\begin{equation}\label{eq:lcu_coeff_app}
c_j := \langle l_h|\phi_j\rangle\langle \phi_j|r_h\rangle.
\end{equation}
Here $(\langle l_h|,|r_h\rangle)$ are the windowed boundary vectors. For the window ${\rm win}=\{0,1,\dots,j^\ast\}$,
define the window weight
\begin{equation}\label{eq:pwin_app}
P_{\rm win}:=\sum_{j\in{\rm win}}|\langle j|r_h\rangle|^2
=\|\Pi_{\rm win}|r_h\rangle\|^2,
\qquad
\Pi_{\rm win}:=\sum_{j\in{\rm win}}\ketbra{j}.
\end{equation}
In the constant-success regime (segment length $t\|K\|=\co(1)$ with windowed readout), $P_{\rm win}=\Omega(1)$.

\begin{lemma}[$\ell_1$-bound for the LCU coefficients]\label{lem:lcu_l1}
With $c_j$ defined in \eqref{eq:lcu_coeff_app}, one has
\begin{equation}\label{eq:l1_bound}
\|c\|_1 := \sum_{j=0}^{M}|c_j|
\le \|l_h\|\,\|r_h\|.
\end{equation}
In particular, for the dilation-consistent normalization used in \eqref{eq:lcu_expansion_app} one may take $\|r_h\|=1$
and $\|l_h\|=1/\sqrt{P_{\rm win}}$, hence
\begin{equation}\label{eq:l1_pwin}
\|c\|_1 \le \frac{1}{\sqrt{P_{\rm win}}}=\co(1).
\end{equation}
\end{lemma}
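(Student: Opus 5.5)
The bound \eqref{eq:l1_bound} is a Cauchy--Schwarz estimate in the eigenbasis of $F_h$. Since $F_h$ is skew-Hermitian (it is a real antisymmetric tridiagonal matrix in the trapezoid-weighted inner product used to define the chain), it is unitarily diagonalizable. Its eigenvectors $\{\ket{\phi_j}\}_{j=0}^M$ therefore form an orthonormal basis of the ancilla space, with Parseval identities $\sum_j|\langle\phi_j|r_h\rangle|^2=\|r_h\|^2$ and $\sum_j|\langle l_h|\phi_j\rangle|^2=\|l_h\|^2$.

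I would write $|c_j|=|\langle l_h|\phi_j\rangle|\,|\langle\phi_j|r_h\rangle|$ and apply the Cauchy--Schwarz inequality on $\mathbb{C}^{M+1}$ to the two coefficient vectors $a_j=\langle\phi_j|l_h\rangle$ and $b_j=\langle\phi_j|r_h\rangle$:
\[
\sum_{j=0}^M|c_j|\le\Big(\sum_j|a_j|^2\Big)^{1/2}\Big(\sum_j|b_j|^2\Big)^{1/2}=\|l_h\|\,\|r_h\|,
\]
which is \eqref{eq:l1_bound}.

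For \eqref{eq:l1_pwin}, I would fix the dilation-consistent normalization. The ancilla is initialized in the normalized state $\ket{r_h}/\|r_h\|$, and the rescaling is absorbed into $\bra{l_h}$ so that the moment identity $\langle l_h|r_h\rangle=1$ in \eqref{eq:window_moments} is preserved. The readout functional is supported on the window. The smallest-norm functional satisfying $\langle l_h|r_h\rangle=1$ with support in the window is
\[
\bra{l_h}=\frac{\bra{r_h}\Pi_{\rm win}}{\|\Pi_{\rm win}r_h\|^2},
\]
whose norm is $1/\|\Pi_{\rm win}r_h\|=1/\sqrt{P_{\rm win}}$.

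This is the step needing care. The specific uniform-weight functional \eqref{lh} need not be the minimal-norm one, so I would either adopt the minimal-norm windowed functional as ``the dilation-consistent normalization'' (this does not affect the windowed moment identities, since any window-supported functional with $\langle l_h|r_h\rangle=1$ satisfies them by \eqref{eq:window_moments}), or bound the norm of \eqref{lh} by a constant multiple of $1/\sqrt{P_{\rm win}}$ using the near-uniform trapezoid weights of $\ket{r_h}$ on the window. Combining this with the previous estimate gives $\|c\|_1\le 1/\sqrt{P_{\rm win}}$. The constant-success assumption $P_{\rm win}=\Omega(1)$ then yields $\co(1)$.
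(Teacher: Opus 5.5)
Your proof is correct and follows the paper's: Cauchy--Schwarz applied to $|c_j|=|\langle l_h|\phi_j\rangle|\,|\langle\phi_j|r_h\rangle|$ with Parseval in the orthonormal eigenbasis, then the normalization $\|r_h\|=1$, $\|l_h\|=1/\sqrt{P_{\rm win}}$. Orthonormality of $\{\ket{\phi_j}\}$ is a standing hypothesis, and the displayed $F_h$ is antisymmetric in the ordinary Euclidean sense, so your appeal to a trapezoid-weighted inner product is unnecessary and, read literally, would put Parseval in the wrong norm. Your extra care in the second part is warranted: the paper simply asserts $\|l_h\|=1/\|\Pi_{\rm win}r_h\|$, which holds with equality only for the minimal-norm window functional $\bra{r_h}\Pi_{\rm win}/P_{\rm win}$ (the uniform functional \eqref{lh} is strictly larger because $\ket{r_h}$ is not constant on the window), so adopting that functional, which still satisfies the windowed moment identities, is the right way to make \eqref{eq:l1_pwin} exact rather than exact only up to a constant factor.
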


\begin{proof}
By \eqref{eq:lcu_coeff_app} and Cauchy--Schwarz,
\[
\sum_{j=0}^{M}|c_j|
=\sum_{j=0}^{M}\big|\langle l_h|\phi_j\rangle\langle \phi_j|r_h\rangle\big|
\le
\Big(\sum_{j=0}^{M}|\langle l_h|\phi_j\rangle|^2\Big)^{1/2}
\Big(\sum_{j=0}^{M}|\langle \phi_j|r_h\rangle|^2\Big)^{1/2}.
\]
Since $\{\phi_j\}$ is an orthonormal basis, the two sums are $\|l_h\|^2$ and $\|r_h\|^2$, proving
\eqref{eq:l1_bound}. For the windowed/truncated choice, $\|r_h\|$ is normalized to $1$ and $l_h$ carries the window
renormalization $\|l_h\|=1/\|\Pi_{\rm win}r_h\|=1/\sqrt{P_{\rm win}}$, which gives \eqref{eq:l1_pwin}.
\end{proof}

\section{Extension to general two-body decompositions with both signs}
Our numerical tests have considered attractive two-body interactions.
In general, after normal ordering and a square decomposition of the interaction kernel, e.g., by diagonalization or Cholesky factorization, followed if needed by a Hermitian recombination of channels, the two-body part can be written as
\begin{equation}
\hat H_2=\frac12\sum_{\gamma=1}^{m}\lambda_\gamma\,\hat v_\gamma^2+\hat H_{\rm shift},
\end{equation}
where each \(\hat v_\gamma\) is one-body, i.e., quadratic in fermionic creation/annihilation operators, \(\lambda_\gamma\in\mathbb{R}\) may have either sign, and \(\hat H_{\rm shift}\) collects constant and one-body terms that can be absorbed into \(\hat H_1\) and the energy offset \cite{MottaZhangChan2019_AFQMCsym,ShiZhang2021_AFQMC_Review}.

For \(\lambda_\gamma<0\), the Hubbard--Stratonovich (HS) transformation introduces a real auxiliary field,
\begin{equation}
e^{-(\tau/2)\lambda_\gamma \hat v_\gamma^2}
=
\frac{1}{\sqrt{2\pi}}\int_{\mathbb{R}}dx_\gamma\,e^{-x_\gamma^2/2}\,
\exp\!\big(x_\gamma\sqrt{-\tau\lambda_\gamma}\,\hat v_\gamma\big),
\end{equation}
so each sample contributes a generally nonunitary one-body factor \(\exp(\eta_\gamma \hat v_\gamma)\) with real \(\eta_\gamma\). This is the case emphasized in the main text, where each slice is implemented by dilation or LCU.

For \(\lambda_\gamma>0\), the HS coupling is imaginary,
\begin{equation}
e^{-(\tau/2)\lambda_\gamma \hat v_\gamma^2}
=
\frac{1}{\sqrt{2\pi}}\int_{\mathbb{R}}dx_\gamma\,e^{-x_\gamma^2/2}\,
\exp\!\big(i\,x_\gamma\sqrt{\tau\lambda_\gamma}\,\hat v_\gamma\big).
\end{equation}
If \(\hat v_\gamma\) is Hermitian, the sampled noise factor is unitary. The full imaginary-time slice is still nonunitary because of the deterministic drift, and the Monte Carlo estimator acquires fluctuating complex phases, i.e., the usual phase problem \cite{ZhangCarlsonGubernatis1997_CPMC,ZhangKrakauer2003_Phaseless,ShiZhang2021_AFQMC_Review}.

Rather than HS transformation, this paper relies on the Feynman-Kac formula. 
Let
\begin{equation}
\hat H=\hat H_1+\frac12\sum_{\gamma=1}^m \lambda_\gamma \hat v_\gamma^2+\hat H_{\rm shift},
\end{equation}
and let \(\{W_\gamma(\tau)\}_{\gamma=1}^m\) be independent standard real Wiener processes. Define
\begin{equation}
\sigma_\gamma :=
\begin{cases}
\sqrt{-\lambda_\gamma}, & \lambda_\gamma<0,\\[2mm]
i\sqrt{\lambda_\gamma}, & \lambda_\gamma>0,
\end{cases}
\qquad\text{so that}\qquad
\sigma_\gamma^2=-\lambda_\gamma .
\end{equation}
Then the propagator-valued process \(\hat B(\tau)\) may be defined in Stratonovich form by
\begin{equation}
d\hat B(\tau)
=
-\big(\hat H_1+\hat H_{\rm shift}\big)\hat B(\tau)\,d\tau
+\sum_{\gamma=1}^m \sigma_\gamma\,\hat v_\gamma\,\hat B(\tau)\circ dW_\gamma(\tau),
\qquad \hat B(0)=\mathbb{I}.
\end{equation}

Pathwise, \(\hat B(\tau)\) is the time-ordered stochastic exponential
\begin{equation}
\hat B(\tau)
=
\mathcal{T}\exp\!\Big(
-\int_0^\tau (\hat H_1+\hat H_{\rm shift})\,ds
+\sum_{\gamma=1}^m \int_0^\tau \sigma_\gamma \hat v_\gamma \circ dW_\gamma(s)
\Big),
\end{equation}
and its expectation satisfies
\begin{equation}
\mathbb{E}\big[\hat B(\tau)\big]=e^{-\tau \hat H}.
\end{equation}
Thus the same stochastic representation covers both attractive and repulsive channels in one framework. When \(\lambda_\gamma<0\), the noise factors are generally nonunitary. When \(\lambda_\gamma>0\) and \(\hat v_\gamma\) is Hermitian, the noise factors are unitary but the estimator becomes oscillatory because of complex phases. This is precisely the setting in which constrained-path or phaseless controls are introduced in classical AFQMC \cite{ZhangCarlsonGubernatis1997_CPMC,ZhangKrakauer2003_Phaseless,ShiZhang2021_AFQMC_Review}.

\section{Extensions to higher-order stochastic Magnus terms}\label{app:magnus34}
We have mainly considered the first- and second-order Magnus expansion in the paper. 
This expansion can be continued to include higher-order stochastic integrals 
\cite{kamm2021stochastic,kloeden1992numerical}, and thus leads to more accurate stochastic paths.

Let \(U(t_{n+1},t_n)=\exp(\Omega_n)\) denote the one-step propagator, with stochastic differential
generator written in Stratonovich form as
\[
A(t)\,dt
=
-\hat H_{1}\,dt
+\sqrt{2}\sum_{\gamma=1}^m \hat v_\gamma \circ dW_\gamma(t).
\]
Formally,
\[
U(t_{n+1},t_n)
=
\mathcal{T}\exp\!\Big(\int_{t_n}^{t_{n+1}} A(s)\Big),
\qquad
\Omega_n=\Omega_{n,1}+\Omega_{n,2}+\Omega_{n,3}+\Omega_{n,4}+\cdots .
\]
The first three Magnus terms have the standard commutator form
\begin{align}
\Omega_{n,1} &=
\int_{t_n}^{t_{n+1}}A(s_1), \label{eq:mag1_app}\\
\Omega_{n,2} &=
\frac12\int_{t_n}^{t_{n+1}}\!\!\int_{t_n}^{s_1}[A(s_1),A(s_2)], \label{eq:mag2_app}\\
\Omega_{n,3} &=
\frac16\int_{t_n}^{t_{n+1}}\!\!\int_{t_n}^{s_1}\!\!\int_{t_n}^{s_2}
\Big([A(s_1),[A(s_2),A(s_3)]]
+[A(s_3),[A(s_2),A(s_1)]]\Big), \label{eq:mag3_app}
\end{align}
and \(\Omega_{n,4}\) is the corresponding fourth-order nested-commutator term
\cite{kamm2021stochastic}. In the Stratonovich setting these expressions have the same formal
structure as in the deterministic Magnus expansion, except that the coefficients are iterated
Stratonovich stochastic integrals.

For the density-coupled HS decomposition used in the main text, the diffusion operators commute,
\[
[\hat v_\gamma,\hat v_{\gamma'}]=0\qquad\forall \gamma,\gamma'.
\]
Hence all pure noise-noise commutator terms vanish. At second order this removes L\'evy-area
contributions and yields the order-2 generator in the paper. At higher orders, the only Lie
brackets that can appear are nested commutators involving \(\hat H_1\) and the
\(\hat v_\gamma\), for example
\[
[\hat H_{1},\hat v_\gamma],\qquad
[\hat H_{1},[\hat H_{1},\hat v_\gamma]],\qquad
[\hat v_{\gamma'},[\hat H_{1},\hat v_\gamma]],\ \ldots
\]
All of these remain one-body because commutators of one-body fermionic operators are one-body.
Therefore the third- and fourth-order Magnus truncations,
\[
\Omega_n^{(3)}:=\Omega_{n,1}+\Omega_{n,2}+\Omega_{n,3},
\qquad
\Omega_n^{(4)}:=\Omega_n^{(3)}+\Omega_{n,4},
\]
still generate fermionic Gaussian maps \(\exp(\Omega_n^{(3)})\) and
\(\exp(\Omega_n^{(4)})\) that preserve Slater-determinant structure.

In the commuting-noise setting, the additional random variables entering higher-order Magnus
truncations can be expressed in terms of a small number of time-integrated Brownian functionals
per channel, in addition to the Wiener increment
\[
\Delta W_{\gamma,n}=W_\gamma(t_{n+1})-W_\gamma(t_n)\sim \mathcal{N}(0,\Delta t).
\]
For example, on a single step \([0,\Delta t]\), they are given by, 
\begin{align}
J_{0\gamma} &:= \int_0^{\Delta t} W_\gamma(s)\,ds
= \int_0^{\Delta t}(\Delta t-s)\,dW_\gamma(s), \label{eq:J0_def_app}\\
J_{00\gamma} &:= \int_0^{\Delta t}\!\!\int_0^{s_1} W_\gamma(s_2)\,ds_2\,ds_1
= \int_0^{\Delta t}\frac{(\Delta t-s)^2}{2}\,dW_\gamma(s), \label{eq:J00_def_app}\\
J_{000\gamma} &:= \int_0^{\Delta t}\!\!\int_0^{s_1}\!\!\int_0^{s_2} W_\gamma(s_3)\,ds_3\,ds_2\,ds_1
= \int_0^{\Delta t}\frac{(\Delta t-s)^3}{6}\,dW_\gamma(s). \label{eq:J000_def_app}
\end{align}
These variables are linear functionals of the Brownian path and therefore are jointly Gaussian
with \(\Delta W_{\gamma,n}\). Their covariance matrix is obtained directly from the It\^o-integral
representations above via It\^o isometry \cite{kloeden1992numerical}. Thus higher-order Magnus
sampling requires only a modest Gaussian preprocessing layer, while the quantum step remains the
implementation of a one-body exponential.

When the diffusion operators do not commute, the second-order Magnus correction additionally involves L\'evy-area terms. However, these enter through commutators of one-body operators and therefore still produce an effective one-body generator, so a weak second-order method remains implementable within the same fermionic-Gaussian framework.

\section{Proof of the main Theorem }
 \label{app:proof_FT_main}

\begin{theorem}
Let \(\hat H\) be Hermitian with eigenvalues \(E_0<E_1\le\cdots\), gap
\(\Delta:=E_1-E_0>0\), and eigenvectors $\ket{E_0},\ket{E_1}, \cdots $. Let \(\ket{\psi_T}\) be a normalized trial wave function with ground overlap
\(\gamma:=|\braket{E_0}{\psi_T}|^2>0\), and define
\[
\ket{\psi(\tau)}
:=
\frac{e^{-\tau(\hat H-E_T)}\ket{\psi_T}}
{\|e^{-\tau(\hat H-E_T)}\ket{\psi_T}\|},
\qquad E_T\in\mathbb{R}.
\]
We choose a path-independent Magnus bound as in \cref{sm:Kmax_bound}
\[
\|\Omega_n(\omega)\|\le \norm{\Omega} \Delta t,
\]
and  a segment length \(\tau=n_T\Delta t\) with
\begin{equation}\label{tau-dt}
\tau=\Theta(1/\norm{\Omega}),
\qquad
\Delta t=\Theta(\sqrt{\epsilon_{\rm prep}}).
\end{equation}
Assume further that one segment admits a coherent fault-tolerant implementation of the normalized update map with uniform state-preparation error
\[
\sup_{\rho}
\big\|
\widetilde{\Phi}_\tau(\rho)-\Phi_\tau(\rho)
\big\|_1
\le c_{\rm prep}\,\epsilon_{\rm prep},
\]
where
\[
\Phi_\tau(\rho)
=
\frac{K_\tau \rho K_\tau}{\Tr(K_\tau \rho K_\tau)},
\qquad
K_\tau=e^{-\tau(\hat H-E_T)}.
\]

Then there is a fault-tolerant algorithm that, after
\[
S=\Theta\!\Big(\kappa \log \frac{1}{\gamma \epsilon}\Big),
\qquad
\kappa:=\frac{\norm{\Omega}}{\Delta},
\]
segments, prepares a normalized state \(\ket{\psi_S}\) satisfying
\[
\langle \psi_S|\hat H|\psi_S\rangle-E_0\le \epsilon,
\]
provided
\[
\epsilon_{\rm prep}
=
\mathcal O\!\left(
\frac{(1-e^{-2\tau\Delta})\,\epsilon}{\|\hat H\|}
\right).
\]

Moreover, given a block-encoding of \(\hat H/\alpha\) (\(\alpha\ge \|\hat H\|\)), the final energy
\(\langle \psi_S|\hat H|\psi_S\rangle\) can be estimated to additive error \(\epsilon\) with failure probability at most
\(\delta\) using
\[
\widetilde{\mathcal O}\!\big(\alpha\,\epsilon^{-1}\log(1/\delta)\big)
\]
coherent uses of the final state-preparation routine together with the block-encoding
\cite{Rall2021}.
\end{theorem}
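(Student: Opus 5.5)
We will run an ideal/perturbed comparison: first analyze the exact segment map $\Phi_\tau$ iterated $S$ times, then control how the per-segment preparation errors $c_{\rm prep}\epsilon_{\rm prep}$ accumulate under the imperfect maps $\widetilde\Phi_\tau$, and finally invoke coherent energy estimation \cite{Rall2021}.

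\emph{Ideal contraction.} Since $K_\tau$ commutes with $\hat H$, we have $\Phi_\tau^S(\ketbra{\psi_T})=\ketbra{\psi(S\tau)}$. Expanding in the eigenbasis $\ket{E_k}$ gives
\[
\langle\psi(S\tau)|\hat H|\psi(S\tau)\rangle-E_0
=\frac{\sum_{k\ge1}(E_k-E_0)e^{-2S\tau(E_k-E_0)}|\langle E_k|\psi_T\rangle|^2}{\sum_k e^{-2S\tau(E_k-E_0)}|\langle E_k|\psi_T\rangle|^2}
\le \frac{1-\gamma}{\gamma}\,\max_{x\ge\Delta} x e^{-2S\tau x}.
\]
For $S\tau\Delta\gtrsim 1$ the maximum is attained at $x=\Delta$. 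It then suffices that $e^{-2S\tau\Delta}\le\gamma\epsilon/\|\hat H\|$, i.e. $S\tau\Delta\ge\tfrac12\log(\|\hat H\|/(\gamma\epsilon))$. With $\tau=\Theta(1/\|\Omega\|)$ this gives $S=\Theta(\kappa\log(1/(\gamma\epsilon)))$, absorbing $\|\hat H\|$ into the logarithm or treating it as a normalization. This makes the ideal energy error at most $\epsilon/2$.

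\emph{Error accumulation (the main obstacle).} The map $\Phi_\tau$ is nonlinear, and a normalization can amplify trace-distance errors. A naive telescoping bound gives $S\,c_{\rm prep}\epsilon_{\rm prep}$, which would require $\epsilon_{\rm prep}\sim\epsilon/(S\|\hat H\|)$, worse than claimed. To obtain the factor $(1-e^{-2\tau\Delta})$, we exploit the contractivity of $\Phi_\tau$ on the excited-state component.
\begin{itemize}
\item Track the excited-state leakage functional $\eta(\rho)=\Tr(P_{\perp}\rho)/\Tr(P_0\rho)$. Under the exact map it satisfies $\eta(\Phi_\tau\rho)\le e^{-2\tau\Delta}\eta(\rho)$.
\item An $\epsilon_{\rm prep}$ perturbation in trace distance adds at most $O(\epsilon_{\rm prep})$ to $\Tr(P_\perp\rho)$, provided the ground population stays bounded below. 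This holds once the state is near the ground state; in the early transient we use $\gamma$ together with a mild assumption $\epsilon_{\rm prep}\ll\gamma$.
\item The resulting linear recursion $\eta_{s+1}\le e^{-2\tau\Delta}\eta_s+C\epsilon_{\rm prep}$ yields
\[
\eta_S\le e^{-2S\tau\Delta}\eta_0+\frac{C\epsilon_{\rm prep}}{1-e^{-2\tau\Delta}} .
\]
\item The energy error is bounded by $\|\hat H\|$ times the excited population, plus the ideal term.
\end{itemize}
The stated bound $\epsilon_{\rm prep}=\mathcal O\big((1-e^{-2\tau\Delta})\epsilon/\|\hat H\|\big)$ therefore makes the accumulated error at most $\epsilon/2$. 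Mixedness of $\rho$ is harmless, since we bound $\Tr(\hat H\rho)$ directly. A purified state $\ket{\psi_S}$ is obtained from the coherent implementation, or the claim is read as an average energy.

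\emph{Consistency and estimation.} The per-segment error $c_{\rm prep}\epsilon_{\rm prep}$ combines the Magnus weak error $O(\tau\Delta t)$ for the second-order scheme with $\tau=O(1)$, the dilation remainder from the windowed Taylor bound \eqref{eq:trunc_remainder}, and normalization by the $\Omega(1)$ success amplitude, which holds because $\|\Omega\|\tau=O(1)$. Choosing $\Delta t=\Theta(\sqrt{\epsilon_{\rm prep}})$ is consistent with the $\Delta t^2$ local error of the second-order scheme, and $M$ is chosen logarithmically in $1/\epsilon_{\rm prep}$. Finally, given the state-preparation unitary and a block-encoding of $\hat H/\alpha$, the energy-estimation algorithm of \cite{Rall2021} estimates $\langle\psi_S|\hat H|\psi_S\rangle/\alpha$ to precision $\epsilon/\alpha$ with $\widetilde{\mathcal O}(\alpha\epsilon^{-1}\log(1/\delta))$ queries. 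Combined with the $\epsilon$-bound on the prepared energy (rescaling $\epsilon$ by constants), this gives the final statement.
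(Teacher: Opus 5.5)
Your proposal is correct and follows essentially the same route as the paper. Your leakage ratio $\eta$ is the paper's $r(\rho)=(1-g(\rho))/g(\rho)$, and the remaining steps all match: the one-segment contraction by $q=e^{-2\tau\Delta}$; the Lipschitz perturbation step using a $\gamma$-dependent lower bound on the ground weight; the recursion $r_{s+1}\le q\,r_s+C_\gamma\epsilon_{\rm prep}$, summed geometrically to give the $1/(1-q)$ factor; the bound $\Tr(\hat H\rho)-E_0\le C_H\,r(\rho)$; and the final appeal to Rall's estimator. Your direct eigenbasis computation for the ideal pure state is just a slightly sharper variant of the paper's $C_H q^S(1-\gamma)/\gamma$ term.

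One small slip appears in your consistency remark. The second-order Magnus slice error is $\mathcal O(\Delta t^3)$, hence $\mathcal O(\tau\Delta t^2)$ per segment, and it is this, not $\mathcal O(\tau\Delta t)$, that makes $\Delta t=\Theta(\sqrt{\epsilon_{\rm prep}})$ consistent. This is inessential, since the uniform segment error is a hypothesis of the theorem.
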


Similar to standard analysis of numerical methods for differential equations,
we organize the proof in three parts: (i) one-segment local error, (ii)  contraction under imaginary-time projection, and (iii) the recursion that yields the global error bound. 

\paragraph{Local error.} Write the normalized, ideal, one-segment imaginary-time map as \[ \Phi_\tau(\rho) := \frac{K_\tau \rho K_\tau}{\Tr(K_\tau \rho K_\tau)}, \qquad K_\tau:=e^{-\tau(\hat H-E_T)}. \] 
We used a Kraus form for the density matrix that is commonly used in expressing imaginary time evolution. 
Since \(E_T\) is a scalar, the normalized map \(\Phi_\tau\) is independent of the choice of shift. Thus the ideal state after \(S\) segments is \[ \rho_S^\star=\Phi_\tau^S(\rho_0), \qquad \rho_0=\ketbra{\psi_T}. \]

Meanwhile,  one segment of our algorithm admits a coherent fault-tolerant approximation of the channel, denoted by \(\widetilde\Phi_\tau\). We choose the step size $\dt$, such that,  
\begin{equation} \sup_{\rho}\big\|\widetilde\Phi_\tau(\rho)-\Phi_\tau(\rho)\big\|_1 \le c_{\rm prep}\,\epsilon_{\rm prep}. \label{eq:uniform_seg_error_app} \end{equation} 
This bound is obtained from two contributions. First, the second-order Magnus truncation has local slice error \(\mathcal O(\Delta t^3)\), hence over one segment \[ n_T\,\mathcal O(\Delta t^3)=\mathcal O(\tau\Delta t^2)=\mathcal O(\epsilon_{\rm prep}), \] using \(\tau=\Theta(1/\norm{\Omega})\) and \(\Delta t=\Theta(\sqrt{\epsilon_{\rm prep}})\) \eqref{tau-dt}. Second, the dilation/LCU approximation is chosen so that the cumulative slice implementation error over one segment is also \(\mathcal O(\epsilon_{\rm prep})\), by using the light cone condition and increasing $n_A$. In the constant-success regime \(\tau=\Theta(1/\norm{\Omega})\), oblivious amplitude amplification yields a coherent normalized segment routine, so \eqref{eq:uniform_seg_error_app} holds uniformly in the input state.

\paragraph{Energy contraction.} This is the same gap-overlap contraction mechanism that underlies fault-tolerant imaginary-time and spectral-filtering algorithms for ground-state preparation; see, e.g., Refs.~\cite{LinTong2020nearoptimalground,DongLinTong2022QETU}. Let \[ P_0=\ketbra{E_0}{E_0}, \qquad g(\rho):=\Tr(P_0\rho), \qquad r(\rho):=\frac{1-g(\rho)}{g(\rho)}. \] 
For the initial trial state \(\rho_0=\ketbra{\psi_T}\), we have 
\[ g(\rho_0)=\gamma, \qquad r(\rho_0)=\frac{1-\gamma}{\gamma}. \]

More generally, if \(\rho=\sum_{j,k}\rho_{jk}\ketbra{E_j}{E_k}\), then \[ g(\Phi_\tau(\rho)) = \frac{\rho_{00}}{\rho_{00}+\sum_{k\ge 1}\rho_{kk}e^{-2\tau(E_k-E_0)}}, \] and therefore \[ r(\Phi_\tau(\rho)) = \frac{\sum_{k\ge 1}\rho_{kk}e^{-2\tau(E_k-E_0)}}{\rho_{00}} \le e^{-2\tau\Delta}\,r(\rho). \]

To proceed, we define \[ q:=e^{-2\tau\Delta}\in(0,1), \] 
we obtain after \(S\) ideal segments
\begin{equation} 
r(\rho_S^\star)\le q^S\,\frac{1-\gamma}{\gamma}. \label{eq:ideal_ratio_contract_app} \end{equation} 

To convert this into an energy bound, set 
\[ C_H:=\|\hat H-E_0 I\|\le 2\|\hat H\|. \] 
For any normalized \(\rho\) with  \(\rho=\sum_{j,k}\rho_{jk}\ketbra{E_j}{E_k}\),  the following holds
\[ \Tr[(\hat H-E_0 I)\rho] = \sum_{k\ge 1}\rho_{kk}(E_k-E_0) \le C_H\sum_{k\ge 1}\rho_{kk} = C_H\,(1-g(\rho)). \]
Hence we have 
\[ \Tr(\hat H\rho)-E_0 \le C_H\,\frac{r(\rho)}{1+r(\rho)} \le C_H\,r(\rho). \] Combining this with \eqref{eq:ideal_ratio_contract_app} yields the ideal contraction estimate \begin{equation} \Tr(\hat H\rho_S^\star)-E_0 \le C_H\,q^S\frac{1-\gamma}{\gamma}. \label{eq:ideal_energy_contract_app} \end{equation} 

\paragraph{Global error.} Again let \(\rho_{s+1}:=\widetilde\Phi_\tau(\rho_s)\) be the actually prepared state after \(s+1\) segments. Since the ideal map increases the ground weight and the segment error bound \eqref{eq:uniform_seg_error_app} is uniform, all states \(\rho_s\) remain in a compact set with \(g(\rho_s)\) bounded below by a constant depending only on \(\gamma\), provided \(\epsilon_{\rm prep}\) is sufficiently small. Thus, the map \[ g\mapsto r=\frac{1-g}{g} \] can be made Lipschitz. Therefore there exists a constant \(C_\gamma>0\), depending only on the initial overlap \(\gamma\), such that \[ r(\widetilde\Phi_\tau(\rho)) \le r(\Phi_\tau(\rho))+C_\gamma\,\epsilon_{\rm prep} \] for every normalized input \(\rho\). 

Comparing to the ideal contraction from the previous step, we obtain the recursion \[ r(\rho_{s+1}) \le q\,r(\rho_s)+C_\gamma\,\epsilon_{\rm prep}. \] 
Iterating gives 
\begin{equation} r(\rho_S) \le q^S\,\frac{1-\gamma}{\gamma} + \frac{C_\gamma}{1-q}\,\epsilon_{\rm prep}. \label{eq:r_recursion_final_app}
\end{equation}
Applying again the energy bound \(\Tr(\hat H\rho)-E_0\le C_H r(\rho)\), we obtain \begin{equation} 
\bra{\psi_S}\hat H\ket{\psi_S}-E_0 \le C_H\,q^S\frac{1-\gamma}{\gamma} + \frac{C_H C_\gamma}{1-q}\,\epsilon_{\rm prep}. 
\label{eq:energy_final_bound_app} 
\end{equation} 

We now choose parameters so that each term in \eqref{eq:energy_final_bound_app} is at most \(\epsilon/2\). For the contraction term, it suffices to take \[ S \ge \frac{1}{2\tau\Delta} \log\!\Big(\frac{2C_H(1-\gamma)}{\gamma\,\epsilon}\Big), \] which implies \[ S=\Theta\!\Big(\frac{\norm{\Omega}}{\Delta}\log\frac{1}{\gamma\epsilon}\Big) = \Theta(\kappa\log(1/(\gamma\epsilon))) \] under \(\tau=\Theta(1/\norm{\Omega})\).
The logarithmic dependence on \(\gamma^{-1}\) here pertains only to the number of imaginary-time segments needed to contract the excited-state weight. 

For the preparation error term, it suffices to require \[ \epsilon_{\rm prep} \le \frac{(1-q)\,\epsilon}{2C_H C_\gamma}, \qquad q=e^{-2\tau\Delta}, \] which is equivalent to \[ \epsilon_{\rm prep} = \mathcal O\!\left( \frac{(1-e^{-2\tau\Delta})\,\epsilon}{\|\hat H\|} \right). \] This proves the state-preparation part of the theorem. Finally, once the segment routine is available coherently, the full preparation of \(\ket{\psi_S}\) is obtained by composing the \(S\) segment unitaries and their ancilla restorations. Given in addition a block-encoding of \(\hat H/\alpha\), standard coherent expectation-estimation primitives imply that \(\bra{\psi_S}\hat H\ket{\psi_S}\) can be estimated to additive error \(\epsilon\) with failure probability at most \(\delta\) using \[ \widetilde{\mathcal O}\!\left(\alpha\,\epsilon^{-1}\log\frac{1}{\delta}\right) \] controlled uses of the final state-preparation routine together with the block-encoding \cite{Rall2021}. This proves the theorem.
\end{document}